\newtheorem{theorem}{Theorem}
\newtheorem{example}{Example}
\newtheorem{proposition}{Proposition}
\newtheorem{lemma}{Lemma}
\newtheorem{corollary}{Corollary}
\newtheorem{remark}{Remark}
\theoremstyle{definition}
\def\BibTeX{{\rm B\kern-.05em{\sc i\kern-.025em b}\kern-.08em
    T\kern-.1667em\lower.7ex\hbox{E}\kern-.125emX}}
\begin{document}

\title{Sparse Point-wise Privacy Leakage: Mechanism Design and Fundamental Limits 
}

\author{\IEEEauthorblockN{1\textsuperscript{st} Amirreza Zamani}
\IEEEauthorblockA{\textit{Department of Information Science and Engineering, KTH} \\
Stockholm, Sweden \\
amizam@kth.se}
\and
\IEEEauthorblockN{2\textsuperscript{nd} Sajad Daei}
\IEEEauthorblockA{\textit{Department of Information Science and Engineering, KTH} \\
Stockholm, Sweden \\
sajado@kth.se}
\and
 \IEEEauthorblockN{3\textsuperscript{rd} Parastoo Sadeghi}
\IEEEauthorblockA{\textit{School of Engineering and Technology, UNSW} \\
Canberra, Australia \\
p.sadeghi@unsw.edu.au}
\and
\IEEEauthorblockN{4\textsuperscript{th} Mikael Skoglund}
\IEEEauthorblockA{\textit{Department of Information Science and Engineering, KTH} \\
Stockholm, Sweden \\
skoglund@kth.se}}
\maketitle
\begin{abstract}

We study an information-theoretic privacy mechanism design problem, where an agent observes useful data $Y$ that is
arbitrarily correlated with sensitive data $X$, and design disclosed data $U$ generated from $Y$ (the agent
has no direct access to $X$). We introduce \emph{sparse point-wise privacy leakage}, a worst-case privacy criterion
that enforces two simultaneous constraints for every disclosed symbol $u\in\mathcal{U}$: (i) $u$ may be correlated
with at most $N$ realizations of $X$, and (ii) the total leakage toward those realizations is bounded. This notion
captures scenarios in which each disclosure should affect only a small subset of sensitive outcomes while maintaining
strict control over per-output leakage.
In the high-privacy regime, we use concepts from information geometry to obtain a local quadratic approximation of
mutual information which measures utility between $U$ and $Y$. When the leakage matrix $P_{X|Y}$ is invertible, this approximation reduces the design problem to
a sparse quadratic maximization, known as the Rayleigh-quotient problem, with an $\ell_0$ constraint. We further show that,
for the approximated problem, one can without loss of optimality restrict attention to a binary released variable
$U$ with a uniform distribution. For small alphabet sizes, the exact sparsity-constrained optimum can be computed via
combinatorial support enumeration, which quickly becomes intractable as the dimension grows.
For general dimensions, the resulting sparse Rayleigh-quotient maximization is NP-hard and closely related to sparse
principal component analysis (PCA). We propose a convex semidefinite programming (SDP) relaxation that is solvable in
polynomial time and provides a tractable surrogate for the NP-hard design, together with a simple rounding procedure
to recover a feasible leakage direction. We also identify a sparsity threshold beyond which the sparse optimum
saturates at the unconstrained spectral value and the SDP relaxation becomes tight. Numerical experiments demonstrate
a sharp phase-transition behavior and show that beyond this threshold the SDP solution matches the exact
combinatorial optimum.

\end{abstract}

\begin{IEEEkeywords}
sparse privacy leakage, point-wise measure, local approximation.
\end{IEEEkeywords}

\section{Introduction}
As shown in Fig.~\ref{sys1}, an agent wants to disclose some useful data denoted by a random variable (RV) $Y$ to a user. Here, $Y$ is arbitrarily correlated with the private data denoted by RV $X$. Furthermore, the agent uses a privacy mechanism to produce the disclosed data denoted by RV $U$. The agent’s goal is to design $U$ based on $Y$ which discloses as much information as possible about $Y$ while satisfying a privacy criterion. 
We use mutual information to measure utility, and we introduce a new metric, called \emph{sparse point-wise privacy leakage}, to quantify privacy. Sparse point-wise privacy leakage consists of two constraints. The first constraint bounds the number of realizations of $X$ that can be correlated with $u \in \mathcal{U}$. The second constraint bounds the amount of leakage from $X$ to $u$ through the correlated letters of $X$. For the first constraint, we use the $\ell_0$-norm and for the second, we use the $\chi^2$-distance.

Sparse point-wise privacy leakage is a practical measure, as in many scenarios 
$u\in\mathcal{U}$ cannot disclose information about some 
$x\in\mathcal{X}$ and can leak information about only a few of them. For instance, in medical data sharing, the sensitive variable $X$ may represent a patient’s exact diagnosis, while the disclosed variable $U$ corresponds to a medical summary released to a third party; it is undesirable for a single summary $u$ to be informative about many diagnoses. Sparse point-wise privacy leakage ensures that each $u$ is correlated with only a small subset of medical conditions, thereby limiting worst-case inference while preserving utility.


\begin{figure}[]
	\centering
	\includegraphics[width = 0.5\textwidth]{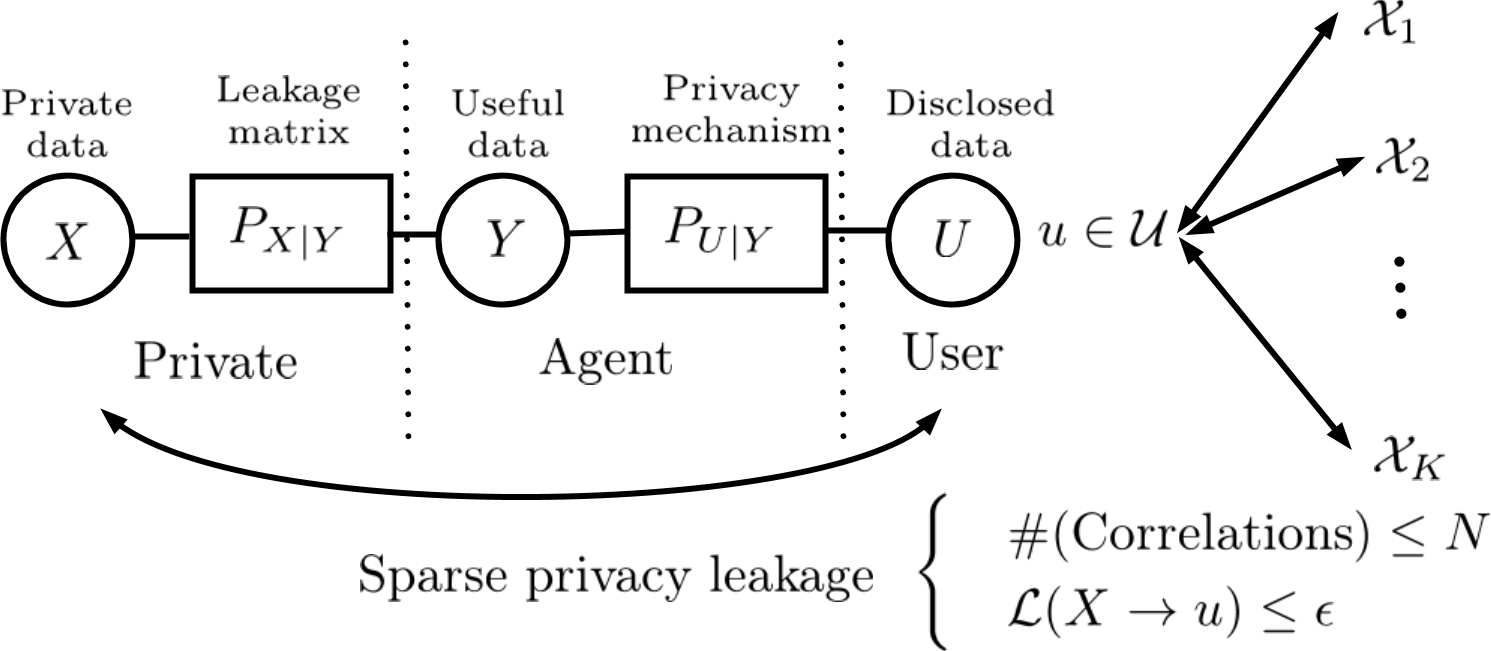}
	\caption{In this model, disclosed data $U$ is designed which maximizes the information released about $Y$ and satisfies the sparse point-wise privacy leakage constraint. Here, each realization of $U$ must satisfy two constraints:
1. each $u$ can be correlated with at most $N$ letters of $X$; and
2. the total leakage from $X$ to $u$ is bounded.
Here, $\mathcal{L}(X \rightarrow u_i)$ denotes the amount of leakage from $X$ (through the correlated letters) to $u \in \mathcal{U}$.
}
	\label{sys1}
\end{figure}
\subsection{Related works}
Related works on the information-theoretic privacy design can be found in \cite{borz,koala,shah,khodam,Khodam22,zarab1,zarab2,makhdoumi, Total, Calmon1,oof,razegh,emma,ang,sep,lopuha,long}. 
In \cite{koala}, \emph{secrecy by design} problem is studied under the perfect secrecy assumption. Bounds on secure decomposition have been derived using the Functional Representation Lemma. 
In \cite{shah}, the privacy problems considered in \cite{koala} are generalized by relaxing the perfect secrecy constraint and allowing some leakage. 
In \cite{borz}, the problem of privacy-utility trade-off considering mutual information both as measures of privacy and utility is studied. Under perfect privacy assumption, it has been shown that the privacy mechanism design problem can be reduced to linear programming. 
The concept of \emph{lift} is studied in \cite{zarab2} and represents the likelihood ratio between the posterior and prior beliefs concerning sensitive features within a dataset. In \cite{Calmon2}, fundamental limits of the privacy utility trade-off measuring the leakage using estimation-theoretic guarantees are studied.

In \cite{khodam} and \cite{Khodam22}, two point-wise privacy leakage have been introduced.
As discussed in \cite{khodam} and \cite{Khodam22}, it may not be desirable to use average measures to quantify privacy leakage, since some data points (realizations) may leak more information than a prescribed threshold. In other words, if an adversary has access to such letters, it can infer a considerable amount of information about the sensitive data $X$. 
 It has been shown that by using concepts from information geometry, the main complex design problem can be approximated by linear algebra problems.
This method has been used in various network information theory problems, as well as in privacy, secrecy, and fairness design problems \cite{Shashi,huang,khodam,Khodam22,shah,razegh,emma,ang,long,zamani2025fair,zamani2025fair2}. 
Specifically, this approach has been applied to point-to-point and selected broadcast channels \cite{Shashi,huang}, as well as to privacy mechanism design \cite{khodam,Khodam22,shah,razegh,emma,long,ang}. In particular, \cite{ang} approximates mutual information and relative entropy by quadratic functions, while \cite{long} considers Local Information Privacy (LIP) and max-lift as privacy leakage measures. The same framework has also been used to design fair mechanisms under bounded demographic parity and equalized odds constraints \cite{zamani2025fair,zamani2025fair2}. Additionally, information-geometric methods have enabled local approximations of secrecy capacity for wiretap channels \cite{emma} and hypothesis testing problems under bounded mutual information constraints in high-privacy regimes \cite{ang}.

\subsection{Contributions}
This paper develops a privacy mechanism design framework under a new \emph{sparse point-wise leakage} requirement:
each disclosed symbol $u$ is allowed to be informative about only a limited number of sensitive letters of $X$, and
even for those letters the leakage must remain small. Our main contributions are:

\begin{enumerate}
\item \textbf{A sparse point-wise privacy criterion.}
We introduce \emph{sparse point-wise privacy leakage}, which imposes two per-output constraints: (i) each disclosed
symbol $u$ can be correlated with at most $N$ realizations of $X$ (a sparsity constraint), and (ii) the total
per-output leakage toward those realizations is bounded (via a point-wise $\chi^2$ constraint). This criterion
captures operational scenarios where disclosures should be \emph{localized} to a small subset of sensitive outcomes
while controlling worst-case leakage.

\item \textbf{High-privacy approximation via information geometry.}
In the high-privacy regime (small leakage budget), we apply the information-geometric local approximation framework
of~\cite{khodam,Khodam22} to obtain a quadratic approximation of the privacy--utility tradeoff. Under an invertible
leakage matrix $P_{X|Y}$, the resulting mechanism design reduces to a \emph{sparse quadratic maximization} problem
equivalently expressed as a sparse Rayleigh-quotient over the subspace orthogonal to $\sqrt{P_X}$.

\item \textbf{Structural simplification: binary outputs suffice (for the approximation).}
For the approximated problem, we prove that one can without loss of optimality restrict attention to a \emph{binary}
released variable $U$ with a uniform distribution. This converts the design into selecting a single sparse
\emph{leakage direction}, greatly simplifying both analysis and computation.

\item \textbf{Exact characterization for small dimensions and the NP-hardness barrier.}
For small alphabet sizes, the sparse Rayleigh-quotient optimum (and the corresponding Pareto value as a function of
the sparsity budget $N$) can be computed exactly by combinatorial support enumeration. However, for general
dimensions this sparse quadratic maximization is NP-hard (closely related to sparse PCA), motivating tractable
relaxations.

\item \textbf{A polynomial-time convex SDP relaxation and recovery.}
We propose a convex semidefinite programming (SDP) relaxation that lifts the sparse quadratic maximization to the
matrix domain, drops the rank-one constraint, and promotes sparsity via an entrywise $\ell_1$ budget. The SDP is
solvable in polynomial time and provides an efficiently computable upper bound; combined with a simple rounding step,
it yields a feasible sparse leakage direction.

\item \textbf{Tightness (exactness) threshold and empirical transition.}
We characterize a deterministic saturation/exactness threshold: once the sparsity budget is large enough to contain the
dominant spectral leakage direction (restricted to $\sqrt{P_X}^{\perp}$), the sparse optimum saturates at the
unconstrained spectral value and the SDP relaxation becomes tight (rank-one). Numerical experiments exhibit a sharp
phase transition and show that beyond this threshold the rounded SDP solution matches the exact combinatorial
optimum.
\end{enumerate}

\noindent\textbf{Paper organization.}
Section~II introduces the sparse point-wise leakage criterion and the privacy design problem.
Section~III derives the local approximation and shows that binary $U$ is sufficient, reducing the design to a sparse
Rayleigh quotient. Section~IV connects this optimization to sparse PCA, proposes an SDP relaxation, and establishes a
deterministic tightness threshold. Section~V provides numerical experiments illustrating a sharp phase transition.
\section{System model and Problem Formulation} \label{system}
Let $P_{XY}$ denote the joint distribution of discrete random variables $X$ and $Y$ defined on finite alphabets $\cal{X}$ and $\cal{Y}$ with equal cardinality, i.e, $|\cal{X}|=|\cal{Y}|=\mathcal{K}$.
We represent $P_{XY}$ by a matrix defined on $\mathbb{R}^{|\mathcal{K}|\times|\mathcal{K}|}$ and 
marginal distributions of $X$ and $Y$ by vectors $P_X$ and $P_Y$ defined on $\mathbb{R}^{|\mathcal{K}|}$ and $\mathbb{R}^{|\mathcal{K}|}$ given by the row and column sums of $P_{XY}$. 
We assume that each element in vectors $P_X$ and $P_Y$ is non-zero. Furthermore, 
we represent the leakage matrix $P_{X|Y}$ by a matrix defined on $\mathbb{R}^{|\mathcal{K}|\times|\mathcal{K}|}$, which is assumed to be invertible. Furthermore, for given $u\in \mathcal{U}$, $P_{X,U}(\cdot,u)$ and $P_{X|U}(\cdot|u)$ defined on $\mathbb{R}^{|\mathcal{X}|}$ are distribution vectors with elements $P_{X,U}(x,u)$ and $P_{X|U}(x|u)$ for all $x\in\cal X$. 
The relation between $U$ and $Y$ is described by the kernel $P_{U|Y}$ defined on $\mathbb{R}^{|\mathcal{U}|\times|\mathcal{Y}|}$. 
In this work, $P_{X}(x)$, $P_{X}$, $\sqrt{P_{X}}$ and $[P_{X}]$ denote $P_{X}(X=x)$, distribution vector of $X$, a vector with entries $\sqrt{P_X(x)}$, and a diagonal matrix with diagonal entries equal to $P_{X}(x)$, respectively.
Here, $\ell_2$ and $\ell_0$-norm of a vector $V$ are denoted by $\|V\|_2$ and $\|V\|_0$, respectively. Furthermore, we define $W\triangleq [\sqrt{P_Y}^{-1}]P_{X|Y}^{-1}[\sqrt {P_X}]$. In this work, $\sigma^2_1\leq \sigma^2_2\ldots \leq\sigma^2_{\mathcal{K}} $ and $V_{1}, \ldots,V_{\mathcal{K}}$ correspond to the singular values and vectors of $W$, respectively, where $\|V_i\|_{2}=1$. 
Furthermore, we assume that the Markov chain $X - Y - U$ holds.
Our goal is to design the privacy mechanism that produces the disclosed data $U$, which maximizes the utility and satisfies the sparse point-wise privacy criterion.
In this work, utility is measured by the mutual information $I(U;Y)$, while privacy leakage is quantified by the following point-wise constraints: 
\begin{subequations}\label{sp}
\begin{align}
    \|P_{X|U=u}-P_X\|_{0}&\leq N, \ \forall u\in\mathcal{U},\label{l0}\\
    \mathcal{L}(X \rightarrow u)\triangleq \chi^2(P_{X|U=u};P_X)&\leq 
    \epsilon^2, \ \forall u\in\cal U,\label{l2}
    \end{align}
\end{subequations}
where the $\ell_0$-norm corresponds to the number of non-zero element of a vector.
 Thus, the privacy problem can be stated as follows 
\begin{align}\label{problem}
    g_{\epsilon}^N(P_{XY})&\triangleq\sup_{\begin{array}{c} 
	\substack{P_{U|Y}: X-Y-U,\\ \|P_{X|U=u}-P_X\|_{0}\leq N, \forall u\in\mathcal{U}, \\ \chi^2(P_{X|u};P_X)\leq\epsilon^2,\ \forall u\in\mathcal{U},}
	\end{array}}I(Y;U),
\end{align}
In this paper, \eqref{l0} specifies the maximum number of letters (alphabets) in $\mathcal{X}$ that can be correlated with each $u \in \mathcal{U}$. Furthermore, \eqref{l2} limits the strength of the correlations between such letters and $u \in \mathcal{U}$.
\begin{remark}
\normalfont
We refer to \eqref{sp} as the \emph{sparse point-wise privacy leakage} constraint, since each $u$ must satisfy both \eqref{l0} and \eqref{l2}. Furthermore, for small $N$, the vector $P_{X|U=u}-P_X$, which reflects the correlation between the letters of $X$ and letter $u$, becomes sparse. 
\end{remark}
\begin{remark}
\normalfont
By letting $N=|\mathcal{X}|=\mathcal{K}$, \eqref{problem} leads to the problem studied in \cite{khodam}. Furthermore, setting either $\epsilon = 0$ or $N=0$, reduces the model to the perfect privacy problem studied in \cite{borz}.
\end{remark}
\begin{remark}
\normalfont
Let $P_{X|U} - [P_X,\ldots,P_X]$ be a matrix whose columns are $P_{X|U=u} - P_X$. Then, \eqref{l0} implies that each column has at most $N$ nonzero elements, and \eqref{l2} limits its energy.
\end{remark}
\begin{example} (Motivating example)
    Consider a medical data--sharing system in which the sensitive variable $X$ represents a patient’s exact diagnosis, drawn from a large set of possible diseases, while the disclosed variable $U$ corresponds to a medical summary released to a third party, such as a researcher or an insurance provider. If a single disclosed symbol $u$ is correlated with many diagnoses, then observing $u$ may allow an adversary to infer significant information about the patient’s underlying condition. To reduce this risk, it is desirable to enforce a sparsity constraint whereby each $u \in \mathcal{U}$ is correlated with only a limited number of diagnoses. For example, a summary indicating cardiovascular risk may be linked to only a few related conditions rather than a broad range of diseases, thereby limiting worst-case disclosure while preserving utility for analysis or decision-making.
\end{example}
\begin{example} (Motivating example $2$)
    Consider a location-based service in which the sensitive variable $X$ denotes a user’s exact location, while the disclosed variable $U$ corresponds to a coarse location label such as \emph{home} or \emph{work}. If a single disclosed symbol $u$ is correlated with many possible locations, observing $u$ can significantly narrow down the user’s true location. A sparse leakage constraint mitigates this risk by ensuring that each $u \in \mathcal{U}$ is correlated with only a small subset of locations, thereby limiting worst-case inference while preserving utility.
\end{example}
\begin{remark}
    \normalfont
    As motivated above, we intend to keep $N$ small; however, we do not want $N$ to be too small. For instance, if $N = 1$, that is, each $u \in \mathcal{U}$ is correlated with only one realization of $X$, guessing may be easier than when $N$ is larger. Thus, $N$ should be small but not too small. We show later that $N = 1$ is not feasible, as it leads to zero utility; see Lemma \ref{lem1} and Remark~\ref{mohem}.
\end{remark}
\begin{lemma}\label{lem1}
    For any $\epsilon\geq 0$, $g_{\epsilon}^{N=1}(P_{XY})=0$. Hence, to attain non-zero utility we must have $N\geq 2$.
\end{lemma}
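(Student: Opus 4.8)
The plan is to show that the sparsity constraint with $N=1$ is so restrictive that it forces $U$ to be independent of $Y$, so that $I(Y;U)=0$ for \emph{every} feasible mechanism, while the value $0$ is trivially attained by a constant $U$. Thus $g_\epsilon^{N=1}(P_{XY})=0$ regardless of $\epsilon$.

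\textbf{Step 1 (the $\ell_0$ constraint collapses to perfect privacy).} Fix any feasible $P_{U|Y}$ and any $u\in\mathcal{U}$ with $P_U(u)>0$. Both $P_{X|U=u}$ and $P_X$ are probability vectors on $\mathcal{X}$, so the difference $d_u\triangleq P_{X|U=u}-P_X$ has entries summing to zero. By \eqref{l0} with $N=1$, $d_u$ has at most one nonzero coordinate; a single-entry vector whose entries sum to zero is the zero vector, hence $d_u=\mathbf{0}$, i.e.\ $P_{X|U=u}=P_X$ for all such $u$. (In particular $I(X;U)=0$, and the $\chi^2$ constraint \eqref{l2} is then automatically satisfied.) This is the crux of the argument.

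\textbf{Step 2 (propagate independence from $X$ to $Y$ via invertibility).} Using the Markov chain $X-Y-U$ we have, in matrix--vector form, $P_{X|U=u}=P_{X|Y}\,P_{Y|U=u}$, and marginalizing gives $P_X=P_{X|Y}\,P_Y$. Subtracting and invoking Step~1 yields $P_{X|Y}\,(P_{Y|U=u}-P_Y)=\mathbf{0}$ for every $u$ with $P_U(u)>0$. Since $P_{X|Y}$ is invertible by the standing assumption, $P_{Y|U=u}=P_Y$ for all such $u$, i.e.\ $Y$ is independent of $U$, so $I(Y;U)=0$ and therefore $g_\epsilon^{N=1}(P_{XY})\le 0$. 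The reverse inequality is immediate, since a deterministic constant $U$ is feasible ($P_{X|U=u}=P_X$ satisfies both \eqref{l0} and \eqref{l2}) and gives $I(Y;U)=0$; hence $g_\epsilon^{N=1}(P_{XY})=0$ and non-zero utility requires $N\ge 2$.

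I do not anticipate a genuine obstacle here: the only subtlety is Step~1, where one must observe that the probability-simplex constraint turns the bound $\|P_{X|U=u}-P_X\|_0\le 1$ into the exact equality $P_{X|U=u}=P_X$. Everything after that is linear algebra relying on the invertibility of $P_{X|Y}$, exactly as in the perfect-privacy reduction referenced in the remarks above.
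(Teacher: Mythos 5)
Your proof is correct and takes essentially the same approach as the paper's: both observe that a vector on the probability simplex whose difference from $P_X$ has at most one nonzero coordinate and sums to zero must be identically zero, then propagate $P_{X|U=u}=P_X$ to $P_{Y|U=u}=P_Y$ via the Markov chain $X-Y-U$ and invertibility of $P_{X|Y}$. Your added note that the value $0$ is attained by a constant $U$ makes the supremum argument slightly more explicit but does not change the substance.
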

\begin{proof}
For $N = 1$, only one element of the vector $P_{X|U=u} - P_X$ can be nonzero. However, since
\[
\sum_x \bigl(P_{X|U=u}(x) - P_X(x)\bigr) = 0,
\]
we must have $P_{X|U=u}(x) = P_X(x)$ for all $x$ and $u$. In other words, all elements of $P_{X|U=u} - P_X$ must be zero.

This implies a perfect privacy constraint. Moreover, since $P_{X|Y}$ is invertible, using the Markov chain $X - Y - U$ and multiplying both sides of $P_{X|U=u}(x) = P_X(x)$ by $P_{X|Y}^{-1}$, we obtain
\[
P_{Y|U=u}(y) = P_Y(y), \quad \forall\, u, y,
\]
which implies that $U$ and $Y$ are independent, i.e.,
\[
I(Y;U) = 0.
\]
The final conclusion can also be obtained from \cite[Remark~1]{Khodam22}.

\end{proof}
\begin{proposition}\label{car1}
	It suffices to consider $U$ such that $|\mathcal{U}|\leq|\mathcal{Y}|$. Furthermore, a maximum can be used in \eqref{problem} since the
		corresponding supremum is achieved.
\end{proposition}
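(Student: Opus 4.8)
\emph{Plan.} The statement has two parts — the cardinality bound $|\mathcal{U}|\le|\mathcal{Y}|$ and the attainment of the supremum — and I would handle them separately. For the cardinality bound I would use the standard cardinality-reduction technique, phrased as a linear program over the mixing weights (this phrasing is what makes the \emph{per-symbol} constraints \eqref{l0}--\eqref{l2} easy to preserve). For attainment I would run a routine compactness/closedness argument on the resulting finite-dimensional set of feasible mechanisms.

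\emph{Cardinality bound.} Fix any feasible $P_{U|Y}$, with conditional vectors $\{P_{Y|U=u}\}_{u\in\mathcal{U}}$ and weights $P_U(u)$; by the Markov chain $X-Y-U$ each $P_{X|U=u}=P_{X|Y}P_{Y|U=u}$ satisfies \eqref{l0} and \eqref{l2}. Freezing the conditional vectors and treating the weights $\lambda=(\lambda_u)_u$ as variables, consider
\[
\min_{\lambda\ge 0}\ \sum_{u}\lambda_u H(Y|U=u)\qquad\text{s.t.}\qquad \sum_{u}\lambda_u P_{Y|U=u}=P_Y .
\]
The equality constraint consists of $|\mathcal{Y}|$ scalar equations (summing them over $y$ forces $\sum_u\lambda_u=1$), so the feasible region is a nonempty bounded polytope — it contains the original $P_U$ by the law of total probability — and the linear objective attains its minimum at a basic feasible solution $\lambda^\star$ with at most $|\mathcal{Y}|$ positive components. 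Keeping only the symbols $u$ with $\lambda^\star_u>0$ yields a mechanism with $|\mathcal{U}|\le|\mathcal{Y}|$ that (i) remains consistent with $P_Y$ and with $X-Y-U$ by construction, (ii) reuses only the original conditional vectors, hence still satisfies \eqref{l0}--\eqref{l2} symbol by symbol, and (iii) obeys $I(Y;U)=H(Y)-\sum_u\lambda^\star_u H(Y|U=u)\ge H(Y)-\sum_u P_U(u)H(Y|U=u)$, so its utility is no smaller than before. Taking the supremum over all feasible mechanisms shows that restricting to $|\mathcal{U}|\le|\mathcal{Y}|$ does not change $g_\epsilon^N(P_{XY})$.

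\emph{Attainment.} After the reduction I would take $|\mathcal{U}|=|\mathcal{Y}|=\mathcal{K}$ (padding with zero-weight symbols whose conditional equals $P_Y$, which trivially satisfies \eqref{l0}--\eqref{l2}), so a mechanism is a column-stochastic $\mathcal{K}\times\mathcal{K}$ matrix $P_{U|Y}$ ranging over a compact set, on which $P_{U|Y}\mapsto I(Y;U)$ is continuous since all alphabets are finite and every entry of $P_Y$ is positive. Picking a maximizing sequence $P_{U|Y}^{(n)}$ and extracting a convergent subsequence with limit $P_{U|Y}^\star$, one gets $I(Y;U)=g_\epsilon^N(P_{XY})$ at $P_{U|Y}^\star$. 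For each $u$ with $P_U^\star(u)>0$ we have $P_U^{(n)}(u)>0$ for large $n$ and $P_{X|U=u}^{(n)}\to P_{X|U=u}^\star$ by continuity; since $\{v:\|v\|_0\le N\}$ is closed (a finite union of coordinate subspaces) and $\{q:\chi^2(q;P_X)\le\epsilon^2\}$ is closed, both \eqref{l0} and \eqref{l2} hold at $P_{U|Y}^\star$ for these $u$, and the remaining zero-weight symbols may be deleted without changing $I(Y;U)$ or the $P_Y$-consistency. Hence $P_{U|Y}^\star$ is feasible and the supremum in \eqref{problem} is achieved.

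\emph{Main difficulty.} The reduction step is essentially routine. The only delicate point is in the attainment argument: $P_{U|Y}\mapsto P_{X|U=u}$ is discontinuous where $P_U(u)=0$, and $\|\cdot\|_0$ is not continuous, so one cannot simply invoke continuity of the constraint maps. The fix — noting that the $\ell_0$ constraint still cuts out a \emph{closed} set and that symbols of vanishing probability can be discarded in the limit — is what makes the compactness argument close, and the rest follows the standard template used in such privacy problems (cf.\ \cite{khodam,Khodam22,shah}).
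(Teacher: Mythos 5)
Your proof is correct and follows the same standard template the paper defers to: a Carath\'eodory/basic-feasible-solution argument for the cardinality bound (here phrased as a linear program over the mixing weights, which cleanly preserves the per-symbol constraints), followed by a compactness/closedness argument for attainment over the finite-dimensional set of kernels $P_{U|Y}$. The one substantive thing you add beyond the cited template is the extra step the new sparsity constraint actually requires: since the referenced proposition has no $\ell_0$ constraint, your observation that $\{v:\|v\|_0\le N\}$ is a closed set (a finite union of coordinate subspaces) and hence is preserved under the limit for every symbol $u$ with $P_U^\star(u)>0$ is exactly what lets the compactness argument close in the present setting.
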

\begin{proof}
    The proof follows arguments similar to those in \cite[Proposition~1]{khodam}.
\end{proof}

\section{Main results}
In this section, we first approximate \eqref{problem}. Before stating the next result, we rewrite the conditional distribution $P_{X|U=u}$ as a perturbation of $P_X$. Thus, for any $u\in\mathcal{U}$, we can write $P_{X|U=u}=P_X+\epsilon\cdot J_u$, where $J_u\in\mathbb{R}^{|\mathcal{X}|}$ is a perturbation vector which satisfies following properties.
\begin{align}
\sum_{x\in\mathcal{X}} J_u(x)&=0,\ \forall u,\label{proper1}\\
\sum_{u\in\mathcal{U}} P_U(u)J_u(x)&=0,\ \forall x\label{proper2},\\
\|[\sqrt{P_X}^{-1}]J_u\|_{2}^2&= \sum_{x\in\mathcal{X}} \frac{J_u(x)^2}{P_X(x)}\leq 1, \ \forall u, \label{proper3}\\
\|J_u\|_{0}&\leq N, \ \forall u.\label{proper4}
\end{align}
The first two properties ensure that $P_{X|U=u}$ is a valid probability distribution \cite{khodam,Khodam22}, and the third and fourth properties follows from \eqref{l0} and \eqref{l2}, respectively.
Furthermore, 
letting $L_u\triangleq[\sqrt{P_X}^{-1}]J_u$, we can rewrite the constraints as follows
\begin{align}
L_u\perp \sqrt{P_X},\ &\forall u,\label{proper11}\\
\sum_{u} P_U(u)L_u&=0,\ \label{proper22},\\
\|L_u\|_{2}&\leq 1, \ \forall u, \label{proper33}\\
\|L_u\|_{0}&\leq N, \ \forall u.\label{proper44}
\end{align}
In the next result, we approximate \eqref{problem} by a quadratic function. 
To do so, we use the mutual information approximation stated in \cite[Proposition 3]{khodam}.

\begin{proposition}\label{prop1}
    For sufficiently small $\epsilon$, \eqref{problem} can be approximated by the following problem
    \begin{align}\label{tt}
        \max_{\begin{array}{c} 
		\substack{L_u,P_U: \sum_i P_U(u)L_{u}=0,\ \|L_{u}\|_{0}\leq N,\\ \|L_{u}\|_{2}\leq 1,\ L_{u}\perp \sqrt{P_X}, \forall u,}
		\end{array}} \!\!\!\!\!\!\!0.5\epsilon^2\left(\sum_u \!\!P_U(u_i)\|WL_{u}\|_{2}^2 \right),
    \end{align}
\end{proposition}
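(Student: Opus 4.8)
The plan is to adapt the local (information-geometric) perturbation analysis of \cite[Proposition 3]{khodam} to the sparse setting already encoded in \eqref{proper11}--\eqref{proper44}. The starting point is the reparametrization that is in place: for each $u$ write $P_{X|U=u}=P_X+\epsilon J_u$ with $J_u$ obeying \eqref{proper1}--\eqref{proper4}, and set $L_u=[\sqrt{P_X}^{-1}]J_u$, equivalently $J_u=[\sqrt{P_X}]L_u$. The key algebraic move is to transport this perturbation through the invertible channel $P_{X|Y}$. Since $P_X=P_{X|Y}P_Y$ and, by the Markov chain $X-Y-U$, $P_{X|U=u}=P_{X|Y}P_{Y|U=u}$, we obtain $P_{Y|U=u}=P_Y+\epsilon\,P_{X|Y}^{-1}J_u$. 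Because the columns of $P_{X|Y}$ sum to one we have $\mathbf 1^\top P_{X|Y}^{-1}=\mathbf 1^\top$, so the perturbation $P_{X|Y}^{-1}J_u$ still sums to zero; as $P_Y$ is bounded away from $0$, $P_{Y|U=u}$ is therefore a valid distribution for all sufficiently small $\epsilon$.

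Next I would expand the utility. Writing $I(Y;U)=\sum_u P_U(u)\,D\!\left(P_{Y|U=u}\,\|\,P_Y\right)$ and Taylor-expanding $D(P_Y+\epsilon\Delta\,\|\,P_Y)$ in $\epsilon$, the first-order term vanishes since $\sum_y\Delta(y)=0$, and the second-order term is the $\chi^2$ form $\tfrac{\epsilon^2}{2}\sum_y \Delta(y)^2/P_Y(y)=\tfrac{\epsilon^2}{2}\,\|[\sqrt{P_Y}^{-1}]\Delta\|_2^2$. Substituting $\Delta=P_{X|Y}^{-1}J_u=P_{X|Y}^{-1}[\sqrt{P_X}]L_u$ and using the definition $W=[\sqrt{P_Y}^{-1}]P_{X|Y}^{-1}[\sqrt{P_X}]$ turns this into $\tfrac{\epsilon^2}{2}\|WL_u\|_2^2$; summing over $u$ gives $I(Y;U)=\tfrac{\epsilon^2}{2}\sum_u P_U(u)\|WL_u\|_2^2+o(\epsilon^2)$, which is exactly the objective of \eqref{tt}. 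The feasible set matches as well: $L_u\perp\sqrt{P_X}$ is \eqref{proper11}, the unit $\ell_2$ ball is the $\chi^2$ budget \eqref{proper33}, the $\ell_0\le N$ restriction is \eqref{proper44}, and $\sum_u P_U(u)L_u=0$ is the consistency condition $\sum_u P_U(u)P_{X|U=u}=P_X$ rewritten. That the supremum may be replaced by a maximum is inherited from Proposition~\ref{car1}.

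The step I expect to be the real work is making ``can be approximated by'' precise rather than merely formal. One must (i) bound the $o(\epsilon^2)$ remainder in the KL expansion \emph{uniformly} over the admissible perturbation set --- this uses that the entries of $P_X,P_Y$ are bounded away from zero and that $\|W\|$ is finite, so the Hessian of $D(\cdot\,\|\,P_Y)$ is uniformly controlled on the relevant compact set --- and (ii) run the standard two-sided comparison: any feasible $P_{U|Y}$ for \eqref{problem} induces, after the rescaling above, a feasible point of \eqref{tt} with matching value up to $o(1)$, and conversely an optimizer of \eqref{tt} yields perturbations $P_{X|U=u}=P_X+\epsilon J_u$ that are feasible for \eqref{problem} for small $\epsilon$ with utility $\tfrac{\epsilon^2}{2}$ times the optimal value of \eqref{tt} plus $o(\epsilon^2)$. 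Together these give $g_\epsilon^N(P_{XY})=\tfrac{\epsilon^2}{2}\,(\text{opt. value of \eqref{tt}})\,(1+o(1))$. The remaining manipulations are routine linear algebra and mirror \cite[Proposition 3]{khodam}.
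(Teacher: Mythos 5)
Your proposal is correct and follows the same route the paper takes: the paper's proof is a citation to the second-order Taylor expansion argument of \cite[Proposition 3]{khodam}, with the observation that the sparsity requirement simply adds the $\|L_u\|_0\le N$ constraint, and your write-up faithfully reconstructs that argument (perturbation $P_{X|U=u}=P_X+\epsilon J_u$, transport through $P_{X|Y}^{-1}$ via the Markov chain, $\chi^2$ second-order term, identification with $\|WL_u\|_2^2$, and the matching of the feasible sets). The one point you raise about uniform control of the $o(\epsilon^2)$ remainder is a real issue the paper also glosses over, but your treatment is at least as careful as the reference proof.
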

\begin{proof}
    The proof is similar to \cite{khodam} and is based on the second order Taylor approximation of the KL-divergence. The only difference is that, due to the sparsity condition, we have an extra constraint $\|L_u\|_{0}\leq N$.
\end{proof}
In the next result, we show that to solve \eqref{tt}, it suffices to consider a binary $U$ with uniform distribution.
\begin{proposition}\label{prop2}
    To solve \eqref{tt}, it suffices to consider a binary $U$ with uniform distribution and we have 
    \begin{align}\label{2}
    \eqref{tt}= \max_{\begin{array}{c} 
		\substack{L: \|L\|_{0}\leq N,\\ \|L\|_{2}\leq 1,\ L\perp \sqrt{P_X},}
		\end{array}}0.5\epsilon^2\|WL\|_{2}^2.
    \end{align}
\end{proposition}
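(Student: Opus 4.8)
The plan is to establish equality between \eqref{tt} and the right-hand side of \eqref{2} by proving the two inequalities separately; the trivial direction is an averaging argument that ignores the balance constraint, while the nontrivial direction is achievability via an explicit antisymmetric binary construction.

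For the ``$\le$'' direction I would argue as follows. In any feasible point $(\{L_u\},P_U)$ of \eqref{tt}, each individual vector $L_u$ already satisfies $\|L_u\|_0\le N$, $\|L_u\|_2\le 1$, and $L_u\perp\sqrt{P_X}$, so $L_u$ is feasible for the single-vector program on the right of \eqref{2}. Hence $\|WL_u\|_2^2\le\max_{L}\|WL\|_2^2$ for every $u$, and taking the $P_U$-weighted average preserves this bound: $\sum_u P_U(u)\|WL_u\|_2^2\le\max_L\|WL\|_2^2$. Multiplying by $0.5\epsilon^2$ gives $\eqref{tt}\le$ right-hand side of \eqref{2}. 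Notice this direction does not use the constraint $\sum_u P_U(u)L_u=0$ at all.

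For the ``$\ge$'' direction I would first note that $\{L:\|L\|_0\le N,\ \|L\|_2\le 1,\ L\perp\sqrt{P_X}\}$ is compact (a finite union over supports $S$ with $|S|\le N$ of closed balls intersected with a subspace), so the maximum in \eqref{2} is attained by some $L^\star$. Then I take $U$ binary on $\{0,1\}$ with $P_U(0)=P_U(1)=\tfrac12$ and set $L_0=L^\star$, $L_1=-L^\star$. Each of $L_0,L_1$ inherits the sparsity, $\ell_2$, and orthogonality constraints from $L^\star$; the balance constraint \eqref{proper22} holds because $\tfrac12 L^\star+\tfrac12(-L^\star)=0$; and the objective evaluates to $\tfrac12\|WL^\star\|_2^2+\tfrac12\|W(-L^\star)\|_2^2=\|WL^\star\|_2^2$, since $L\mapsto\|WL\|_2^2$ is even. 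Thus this binary, uniform point is feasible for \eqref{tt} and attains value $0.5\epsilon^2\|WL^\star\|_2^2$, giving $\eqref{tt}\ge$ right-hand side of \eqref{2}. Combining both inequalities yields equality and shows a binary uniform $U$ is optimal.

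The only point needing a little care — not a genuine obstacle — is to confirm that the antisymmetric pair $(L^\star,-L^\star)$ corresponds to a legitimate mechanism, i.e. that $P_{X|U=u}=P_X+\epsilon[\sqrt{P_X}]L_u$ stays a valid distribution induced by some kernel $P_{U|Y}$; for sufficiently small $\epsilon$ this is precisely the regime where the local approximation of Proposition~\ref{prop1} applies, the conditions \eqref{proper11}--\eqref{proper44} (with \eqref{proper22}) being exactly its feasibility constraints, and $|\mathcal{U}|=2\le|\mathcal{Y}|=\mathcal{K}$ respects Proposition~\ref{car1}. The structural idea worth emphasizing is that evenness of $\|W\cdot\|_2^2$ lets us symmetrize any single optimal leakage direction into a mean-zero pair with no loss in utility, collapsing the multi-letter problem to the choice of one sparse direction.
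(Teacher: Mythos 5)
Your proof is correct and follows essentially the same route as the paper: the paper's achievability step is the identical antisymmetric construction $P_U(u_1)=P_U(u_2)=\tfrac12$, $L_{u_1}=-L_{u_2}=L^\star$, and the converse (your ``$\le$'' averaging argument, which indeed does not need the balance constraint) is the part the paper delegates to \cite[Proposition~4]{khodam}. You have merely spelled out both directions explicitly rather than citing the reference.
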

\begin{proof}
   The proof is similar to that of \cite[Proposition~4]{khodam}. Let $L^*$ denote the maximizer of the right-hand side of \eqref{prop2}. In this case, we set $P_U(u_1)=P_U(u_2)=\tfrac{1}{2}$ and $L_{u_1}=-L_{u_2}=L^*$. Clearly, $L^*$ and $-L^*$, with equal weights $P_U(u_1)=P_U(u_2)=\tfrac{1}{2}$, are feasible in \eqref{tt}. Using arguments similar to those in \cite[Proposition~4]{khodam}, we obtain the result.
\end{proof}
\begin{remark}
\normalfont
Using Proposition \ref{prop2}, it suffices to solve the right hand side in \eqref{2}. 
\end{remark}
\begin{corollary}
    We have
    \begin{align}
        \eqref{2}\leq \frac{1}{2}\epsilon^2\sigma_{\mathcal{K}}^2.
    \end{align}
    Furthermore, if $L=V_{\mathcal{K}}$ satisfies $\|L\|_{0}\leq N$, we have 
    \begin{align}
        \eqref{2}=\frac{1}{2}\epsilon^2\sigma_{\mathcal{K}}^2.
    \end{align}
\end{corollary}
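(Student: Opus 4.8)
The plan is to treat the two claims separately: the inequality is a one‑line relaxation argument, and the equality case reduces to checking that $V_{\mathcal K}$ is feasible for \eqref{2} and attains the relaxed optimum.

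For the upper bound, observe that deleting the constraints $\|L\|_{0}\le N$ and $L\perp\sqrt{P_X}$ can only enlarge the feasible set of \eqref{2}, so
\[
\eqref{2}\;\le\;\tfrac12\epsilon^2\max_{\|L\|_2\le 1}\|WL\|_2^2\;=\;\tfrac12\epsilon^2\,\lambda_{\max}(W^\top W)\;=\;\tfrac12\epsilon^2\sigma_{\mathcal K}^2,
\]
where the middle equality is the variational (Rayleigh–quotient) characterization of the largest eigenvalue of $W^\top W$, i.e.\ the squared largest singular value of $W$, attained at $L=V_{\mathcal K}$.

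For the equality, suppose $\|V_{\mathcal K}\|_{0}\le N$. I would show $L=V_{\mathcal K}$ is feasible in \eqref{2}: the norm bound holds since $\|V_{\mathcal K}\|_2=1$, the sparsity bound holds by hypothesis, and the orthogonality $V_{\mathcal K}\perp\sqrt{P_X}$ is automatic. To see the latter, use the elementary identities $P_{X|Y}P_Y=P_X$ (row sums of $P_{XY}$) and $P_{X|Y}^{\top}\mathbf 1=\mathbf 1$ (each row of $P_{X|Y}$ sums to one); substituting these into $W=[\sqrt{P_Y}^{-1}]P_{X|Y}^{-1}[\sqrt{P_X}]$ gives $W^\top W\sqrt{P_X}=\sqrt{P_X}$, so $\sqrt{P_X}$ is a unit right singular vector of $W$ with singular value $1$, and since $1$ is the smallest singular value of $W$ (a standard property of the underlying transition matrix), $V_1=\sqrt{P_X}$. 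Hence, as long as $\sigma_{\mathcal K}>\sigma_1=1$, $V_{\mathcal K}$ is a singular vector for a different singular value and is therefore orthogonal to $V_1=\sqrt{P_X}$ (choosing an orthonormal basis of the top singular subspace if $\sigma_{\mathcal K}$ is repeated). Feasibility in hand, $L=V_{\mathcal K}$ yields objective value $\tfrac12\epsilon^2\|WV_{\mathcal K}\|_2^2=\tfrac12\epsilon^2\sigma_{\mathcal K}^2$, matching the upper bound, so \eqref{2}$=\tfrac12\epsilon^2\sigma_{\mathcal K}^2$.

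The only step that needs care is the automatic orthogonality $V_{\mathcal K}\perp\sqrt{P_X}$ — i.e.\ recognizing $\sqrt{P_X}$ as the extreme singular vector $V_1$ of $W$; the rest is the standard Rayleigh bound together with feasibility bookkeeping. A degenerate edge case worth a remark is $\sigma_{\mathcal K}=\sigma_1=1$ (i.e.\ $W$ orthogonal), where one instead notes that any $N$-sparse unit vector orthogonal to $\sqrt{P_X}$ already achieves the bound.
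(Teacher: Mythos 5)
Your proof is correct and follows essentially the same approach as the paper: relax the feasible set to obtain the spectral upper bound $\tfrac12\epsilon^2\sigma_{\mathcal K}^2$, then observe that $V_{\mathcal K}$ is feasible and attains it when $\|V_{\mathcal K}\|_0\le N$. Two small differences worth noting. First, the paper drops only the $\ell_0$ constraint (keeping $L\perp\sqrt{P_X}$), whereas you drop both; this costs nothing because $V_{\mathcal K}\perp\sqrt{P_X}$ anyway, but the paper's relaxation makes the achievability argument slightly more immediate since the maximizer is never outside the feasible set. Second, you re-derive the fact that $\sqrt{P_X}$ is a right singular vector of $W$ with singular value $1$; the paper simply cites this (it is recalled from \cite[Appendix~C]{khodam} right before the enclosing proposition), and the orthogonality $V_{\mathcal K}\perp V_1$ then follows directly from the orthonormality of the right singular vectors in the SVD, with no case split on whether $\sigma_{\mathcal K}>1$ needed. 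Your explicit verification is sound, though note one wording slip: the identity $P_{X|Y}^{\mathsf T}\mathbf 1=\mathbf 1$ corresponds to each \emph{column} of $P_{X|Y}$ summing to one (i.e., $\sum_x P_{X|Y}(x\mid y)=1$), not each row; the algebra you carry out uses the correct identity.
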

\begin{proof}
    The upper bound is obtained by removing the constraint $\|L\|_{0}\leq N$. Furthermore, if $V_{\mathcal{K}}$ satisfies $\|V_{\mathcal{K}}\|_{0}\leq N$, the upper bound is attained.
\end{proof}
Next, we rewrite the optimization problem on the right-hand side of \eqref{2}. Before stating the next result, we recall that, using \cite[Appendix C]{khodam}, the smallest singular value of $W$ is $1$ with corresponding singular vector $\sqrt{P_X}$, i.e., $\sigma_1=1$ and $V_1=\sqrt{P_X}$. 
\begin{proposition}
    The maximization problem in \eqref{2} can be reformulated as follows
        \begin{align}\label{22}
    \eqref{2}= \max_{\begin{array}{c} 
		\substack{\{\alpha_i\}: \sum_{i=2}^{\mathcal{K}}\alpha_i^2\leq 1,\\ \|\sum_{i=2}^{\mathcal{K}}\alpha_iV_i\|_{0}\leq N,}
		\end{array}}0.5\epsilon^2\sum_{i=2}^{\mathcal{K}}\alpha_i^2\sigma_i^2.
    \end{align}
\end{proposition}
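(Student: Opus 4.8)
The plan is to diagonalize the quadratic form $\|WL\|_2^2$ by an orthonormal change of coordinates, carrying the two remaining (non-quadratic) constraints along unchanged. Since $P_{X|Y}$ is invertible, so is $W=[\sqrt{P_Y}^{-1}]P_{X|Y}^{-1}[\sqrt{P_X}]$, and hence its right singular vectors $V_1,\dots,V_{\mathcal K}$ form an orthonormal basis of $\mathbb{R}^{\mathcal K}$. Every $L$ that is feasible in \eqref{2} can therefore be written uniquely as $L=\sum_{i=1}^{\mathcal K}\alpha_i V_i$ with $\alpha_i=\langle L,V_i\rangle$. Recalling from \cite[Appendix C]{khodam} that $\sigma_1=1$ and $V_1=\sqrt{P_X}$, the constraint $L\perp\sqrt{P_X}$ is exactly $\alpha_1=0$, so $L=\sum_{i=2}^{\mathcal K}\alpha_i V_i$; conversely, any choice of $\alpha_2,\dots,\alpha_{\mathcal K}$ yields an $L$ orthogonal to $\sqrt{P_X}$. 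This gives a bijection between feasible directions $L$ and vectors $(\alpha_2,\dots,\alpha_{\mathcal K})\in\mathbb{R}^{\mathcal K-1}$.

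Next I would transport each ingredient of \eqref{2} through this bijection. By Parseval's identity for the orthonormal basis $\{V_i\}$, $\|L\|_2^2=\sum_{i=2}^{\mathcal K}\alpha_i^2$, so the norm constraint $\|L\|_2\le 1$ becomes $\sum_{i=2}^{\mathcal K}\alpha_i^2\le 1$. The sparsity constraint $\|L\|_0\le N$ does not simplify under the change of basis; written in the new coordinates it reads $\|\sum_{i=2}^{\mathcal K}\alpha_i V_i\|_0\le N$, which is exactly the constraint appearing in \eqref{22}. For the objective, the SVD of $W$ gives $W^{\top}W=\sum_{i=1}^{\mathcal K}\sigma_i^2 V_i V_i^{\top}$, hence $\|WL\|_2^2=L^{\top}W^{\top}W L=\sum_{i=1}^{\mathcal K}\sigma_i^2\langle L,V_i\rangle^2=\sum_{i=2}^{\mathcal K}\sigma_i^2\alpha_i^2$, the last step using $\alpha_1=0$. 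Substituting into $0.5\epsilon^2\|WL\|_2^2$ reproduces the objective in \eqref{22}.

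Combining these observations, the map $L\leftrightarrow(\alpha_2,\dots,\alpha_{\mathcal K})$ sends the feasible set of \eqref{2} onto that of \eqref{22} and preserves the objective value pointwise, so the two maxima coincide, which is the claim. I do not expect a substantive obstacle here: the only point requiring care is that the $\ell_0$ term is genuinely non-convex and is not affected by the orthonormal change of variables, so it must be carried through verbatim rather than ``diagonalized''; everything else is a direct consequence of $\{V_i\}$ being an orthonormal eigenbasis of $W^{\top}W$ together with the known identification $\sigma_1=1$, $V_1=\sqrt{P_X}$.
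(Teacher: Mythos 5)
Your proof is correct and follows essentially the same route as the paper's: expand $L$ in the orthonormal right singular basis $\{V_i\}$ of $W$, note that $V_1=\sqrt{P_X}$ so $L\perp\sqrt{P_X}$ is equivalent to $\alpha_1=0$, and then transport the objective and the $\ell_2$ and $\ell_0$ constraints through this substitution. The only difference is that you spell out Parseval/bijectivity and the $W^{\top}W$ diagonalization in slightly more detail, whereas the paper states the resulting identities directly; the content is identical.
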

\begin{proof}
    Using $V_1=\sqrt{P_X}$, we have
    \begin{align*}
        L\perp \sqrt{P_X}\leftrightarrow L\in\text{span}\{V_2,\ldots,V_{\mathcal{K}}\},
    \end{align*}
    since the singular vectors are mutually orthogonal.
    Hence, in \eqref{2}, we substitute $L$ by $\sum_{i=2}^{\mathcal{K}}\alpha_i V_i$. Then, we have
    \begin{align*}
        \|WL\|_{2}^2&=\sum_{i=2}^{\mathcal{K}}\alpha_i^2\sigma_i^2,\\
        \|L\|_{2}^2&= \sum_{i=2}^{\mathcal{K}}\alpha_i^2\leq 1,\\
        \|L\|_{0}&=\|\sum_{i=2}^{\mathcal{K}}\alpha_iV_i\|_{0}\leq N.
    \end{align*}
    This completes the proof.
\end{proof}
\subsection{From Sparse Leakage Design to Sparse PCA and an SDP Relaxation}\label{subsec:sdp}
\textbf{Connection to sparse Rayleigh-quotient maximization (sparse PCA):}
The local approximation reduces the mechanism design to selecting a \emph{leakage direction} $L$ that maximizes a
quadratic form under a cardinality constraint. Define
$A \triangleq W^{\mathsf T}W \succeq 0$, so that $\|WL\|_2^2 = L^{\mathsf T}AL$. Since the objective is homogeneous in
$L$, the norm constraint is tight at the optimum i.e. $\|L\|_{2}=1$ (whenever the optimum is nonzero), and the approximate design can be
written as a \emph{sparse Rayleigh-quotient} problem on the subspace $\sqrt{P_X}^{\perp}$:
\begin{align}
\max_{L}\quad & \frac{L^{\mathsf T}AL}{L^{\mathsf T}L}
\label{eq:sparse_rayleigh}
\\[-1mm]
\text{s.t.}\quad & L \perp \sqrt{P_X}, \qquad \|L\|_0 \le N .
\nonumber
\end{align}
Equivalently, \eqref{eq:sparse_rayleigh} maximizes the Rayleigh quotient of $A(S,S)$ over the orthogonal subspace $(\sqrt{P_X})^{\perp}_S$ for all  support set $S \subseteq [\mathcal{K}]$ with $|S|\le N$. This is the canonical primitive underlying \emph{sparse PCA}
(cardinality-constrained variance maximization), and is NP-hard in general due to the combinatorial search over
supports \cite{nphard,A.A.Amini,direct_pca,NesterovRichtarikSepulchre2010}. This connection motivates tractable surrogates such as semidefinite relaxations of
cardinality-constrained eigenvalue problems \cite{direct_pca} and efficient heuristic
methods (e.g., generalized power iterations) developed for sparse PCA \cite{NesterovRichtarikSepulchre2010}.

\begin{remark}\label{mohem}
\normalfont
The orthogonality constraint $L \perp \sqrt{P_X}$ implies that the smallest feasible sparsity level is at least two
under our full-support assumption $P_X(x)>0$ for all $x\in\mathcal{X}$. Indeed, any $1$-sparse vector
$L=c e_i$ satisfies $\langle L,\sqrt{P_X}\rangle=c\sqrt{P_X(i)}\neq 0$ unless $c=0$, hence it cannot satisfy
$L\perp \sqrt{P_X}$ except trivially. This justifies Lemma \ref{lem1}. Therefore, nontrivial utility requires $N\ge 2$. Moreover, in case of $N=2$, there exists a $2$-sparse non-zero $L$ supported on ${S}=\{i,j\}$ for any $i\neq j$ satisfying $L\perp {P_X}({S})$ (choose for instance $L_i=P_X(j), L_j=-P_X(i)$).
\end{remark}
For fixed $N\ge 2$ and moderate alphabet size $\mathcal{K}$, one can compute the exact optimum of
\eqref{eq:sparse_rayleigh} by enumerating all supports $S\subseteq[\mathcal{K}]$ with $|S|\le N$ and solving the
resulting Rayleigh-quotient problem restricted to the coordinates in $S$ (with the linear constraint
$\langle L,\sqrt{P_X}\rangle=0$ imposed on that support). However, the number of candidate supports grows as
$\binom{\mathcal{K}}{N}$, and the problem quickly becomes computationally intractable as $\mathcal{K}$ increases.
This motivates polynomial-time surrogates, in particular convex relaxations inspired by the sparse PCA literature
\cite{direct_pca,NesterovRichtarikSepulchre2010,A.A.Amini}.

Proposition~\ref{prop2} reduces the local privacy-utility design to the following quadratic maximization:
\begin{align}\label{eq:main_noncvx_L}
U_{\rm OPT}(N)\triangleq\max_{L\in\mathbb{R}^{\mathcal{K}}}\quad & \frac{1}{2}\epsilon^2\|WL\|_{2}^2 \\
\text{s.t.}\quad & \|L\|_{2}\le 1,\ \ L\perp \sqrt{P_X},\ \ \|L\|_{0}\le N.\nonumber
\end{align}
Problem \eqref{eq:main_noncvx_L} is equivalent to \eqref{eq:sparse_rayleigh} whenever the optimum is nonzero.
We call $U_{\rm OPT}(N)$ the optimum value function (privacy-utility trade-off) because it gives the maximum achievable utility under the sparsity (leakage budget ) constraint $N$.  Varying $N$ traces the Pareto tradeoff between the leakage and utility.
We refer to $L$ as the \emph{leakage direction} since $P_{X|U=u}-P_X=\epsilon[\sqrt{P_X}]\,L$.
The constraint $\|L\|_{0}\le N$ enforces \emph{sparse point-wise leakage}: each disclosed symbol $u$
perturbs at most $N$ letters of $X$.
As discussed above, the $\ell_0$-constrained Rayleigh quotient is NP-hard in general and closely related to sparse PCA.

\textbf{A lifted rank-one view (low-rank + sparse)}:
A standard technique is to lift $L$ to a positive semidefinite matrix $X$:
\begin{align}\label{eq:lift_def}
X \triangleq LL^T \in \mathbb{S}_+^{\mathcal{K}}.
\end{align}
Then $\mathrm{tr}(X)=\|L\|_{2}^2\le 1$, $X\sqrt{P_X}=0$ (since $L\perp \sqrt{P_X}$), and
\begin{align}
L^TAL=\mathrm{tr}(AL L^T)=\mathrm{tr}(AX).
\end{align}
Moreover, the \emph{low-rank} structure is explicit: $X$ must satisfy $\mathrm{rank}(X)=1$.
The sparsity of $L$ induces sparsity in $X$ as well (indeed, for rank-one $X=LL^T$, the support of $X$
is essentially the Cartesian product of the support of $L$).
Therefore, \eqref{eq:main_noncvx_L} is equivalently a \emph{rank-one sparse semidefinite program}:
\begin{align}\label{eq:noncvx_lifted}
&U_{\rm OPT}=\max_{X\succeq 0}\quad  \frac{1}{2}\epsilon^2\,\mathrm{tr}(AX) \nonumber\\
&\text{s.t.}\quad \mathrm{tr}(X)\le 1,\ \ X\sqrt{P_X}=0,\ \ \mathrm{rank}(X)=1, \|X\|_{0}\le N^2 
\end{align}
The nonconvexity is now concentrated into two structural requirements:
\emph{rank-one} and \emph{sparsity}. This is precisely the ``low-rank + sparse'' regime.

\paragraph{Convex SDP relaxation }
To obtain a tractable design with provable properties, we relax both nonconvex structures in the standard way:
(i) drop the rank-one constraint, and (ii) replace sparsity by the entrywise $\ell_1$-norm, the tightest convex
surrogate promoting sparsity of a matrix.
This yields the following convex SDP relaxation:
\begin{align}\label{eq:sdp_relax_final}
&{U}_{\rm SDP}(\tau)\triangleq
\max_{X\succeq 0} \frac{1}{2}\epsilon^2\,\mathrm{tr}(AX) \nonumber\\
&\text{s.t.}\ \mathrm{tr}(X)\le 1,\ \ X\sqrt{P_X}=0,\ \ \|X\|_{1,\mathrm{entry}}\le \tau,
\end{align}
where $\|X\|_{1,\mathrm{entry}}\triangleq \sum_{i,j}|X_{ij}|$ and $\tau>0$ controls the sparsity of the lifted
leakage matrix.
This relaxation is closely aligned with the semidefinite formulation of sparse PCA and is attractive because it
is convex, globally solvable, and produces an \emph{upper bound} on the original nonconvex optimum.
When the solution of \eqref{eq:sdp_relax_final} is rank-one, the relaxation is exact and directly yields the
optimal leakage direction.

\paragraph{Exactness beyond a leakage threshold}
The next theorem formalizes a deterministic threshold phenomenon: once the sparsity budget is large enough to
accommodate the \emph{unconstrained} optimal leakage direction, the nonconvex problem saturates at the global
spectral upper bound and the convex SDP becomes tight (rank-one). This explains the empirically observed
``transition'' where the SDP solution matches the Pareto-optimal nonconvex design.
\begin{theorem}[Saturation of the sparse Rayleigh-quotient and tightness/uniqueness of the SDP]\label{thm:full_tightness}
Let $p \triangleq \sqrt{P_X}\in\mathbb{R}^{\mathcal{K}}$ (note that $\|p\|_2^2=\sum_x P_X(x)=1$), and let
$A\triangleq W^{\mathsf T}W\succeq 0$.
Define the top Rayleigh-quotient value on the subspace $p^\perp$:
\begin{align}
\lambda_\star \triangleq \max_{\|v\|_2=1,\ v\perp p} v^{\mathsf T}A v,
\qquad
v_\star \in \mathop{\arg\max}_{\|v\|_2=1,\ v\perp p} v^{\mathsf T}A v.
\end{align}
Define the thresholds
\begin{align}
N_{\mathrm{th}}\triangleq \|v_\star\|_0, \qquad
\tau_{\mathrm{th}}\triangleq \|v_\star v_\star^{\mathsf T}\|_{1,\mathrm{entry}}
= \|v_\star\|_1^2.
\end{align}
Consider the sparse Rayleigh-quotient problem
\begin{align}
U_{\mathrm{OPT}}(N)\triangleq
\max_{L}\quad & L^{\mathsf T}AL
\label{eq:l0_prob_thm}
\\
\text{s.t.}\quad & \|L\|_2=1,\ \ L\perp p,\ \ \|L\|_0\le N,
\nonumber
\end{align}
and the convex SDP relaxation
\begin{align}
U_{\mathrm{SDP}}(\tau)\triangleq
\max_{X\succeq 0}\quad & \langle A,X\rangle
\label{eq:sdp_prob_thm}
\\
\text{s.t.}\quad & \mathrm{tr}(X)\le 1,\ \ Xp=0,\ \ \|X\|_{1,\mathrm{entry}}\le \tau.
\nonumber
\end{align}
Then:
\begin{enumerate}
\item[\textup{(i)}] (\textup{Sparse saturation}) If $N\ge N_{\mathrm{th}}$, then
\(
U_{\mathrm{OPT}}(N)=\lambda_\star
\)
and $L^\star=\pm v_\star$ is optimal for \eqref{eq:l0_prob_thm}.
\item[\textup{(ii)}] (\textup{SDP tightness}) For any $\tau>0$, $U_{\mathrm{SDP}}(\tau)\le \lambda_\star$.
If $\tau\ge \tau_{\mathrm{th}}$, then $X^\star=v_\star v_\star^{\mathsf T}$ is feasible and achieves equality, hence
\(
U_{\mathrm{SDP}}(\tau)=\lambda_\star
\)
and the relaxation is tight (rank-one).
\item[\textup{(iii)}] (\textup{Uniqueness under a simple top eigenvalue}) If the maximizer on $p^\perp$ is unique up to sign
(equivalently, $\lambda_\star$ is a simple top eigenvalue of $A$ restricted to $p^\perp$),
then for any $\tau\ge \tau_{\mathrm{th}}$ the SDP optimizer is unique and equals
$X^\star=v_\star v_\star^{\mathsf T}$, and the sparse optimizer in \eqref{eq:l0_prob_thm} is unique up to sign.
\end{enumerate}
\end{theorem}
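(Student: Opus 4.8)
The plan is to establish parts (i) and (ii) as matching pairs of bounds and then read off part (iii) from the equality case in the upper bound of (ii). First the upper bounds. Any $L$ feasible for \eqref{eq:l0_prob_thm} has $\|L\|_2=1$ and $L\perp p$, so $L^{\mathsf T}AL\le\lambda_\star$ directly from the definition of $\lambda_\star$, and this holds for every $N$; hence $U_{\mathrm{OPT}}(N)\le\lambda_\star$ always. For the SDP, take any feasible $X$ and write its spectral decomposition $X=\sum_k\mu_k w_k w_k^{\mathsf T}$ with $\mu_k>0$ and $\{w_k\}$ orthonormal. Since $Xp=0$ we have $p\in\ker X$, and as $X$ is symmetric its range equals $(\ker X)^{\perp}\subseteq p^{\perp}$, so each $w_k\in p^{\perp}$. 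Therefore $w_k^{\mathsf T}Aw_k\le\lambda_\star$ for all $k$, and $\langle A,X\rangle=\sum_k\mu_k w_k^{\mathsf T}Aw_k\le\lambda_\star\sum_k\mu_k=\lambda_\star\,\mathrm{tr}(X)\le\lambda_\star$. Note the entrywise $\ell_1$ budget is never invoked, which already proves the first statement of (ii): $U_{\mathrm{SDP}}(\tau)\le\lambda_\star$ for all $\tau>0$.

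Next the achievability. If $N\ge N_{\mathrm{th}}=\|v_\star\|_0$, then $L=v_\star$ satisfies $\|v_\star\|_0\le N$, $\|v_\star\|_2=1$, $v_\star\perp p$, and attains $v_\star^{\mathsf T}Av_\star=\lambda_\star$; combined with the upper bound this gives $U_{\mathrm{OPT}}(N)=\lambda_\star$ with $L^\star=\pm v_\star$ optimal, proving (i). For $\tau\ge\tau_{\mathrm{th}}$, take the rank-one candidate $X^\star=v_\star v_\star^{\mathsf T}$: it is PSD, $\mathrm{tr}(X^\star)=\|v_\star\|_2^2=1$, $X^\star p=v_\star(v_\star^{\mathsf T}p)=0$, and $\|X^\star\|_{1,\mathrm{entry}}=\sum_{i,j}|v_\star(i)\,v_\star(j)|=\big(\sum_i|v_\star(i)|\big)^2=\|v_\star\|_1^2=\tau_{\mathrm{th}}\le\tau$, so it is feasible, and $\langle A,X^\star\rangle=v_\star^{\mathsf T}Av_\star=\lambda_\star$ meets the upper bound. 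Hence $U_{\mathrm{SDP}}(\tau)=\lambda_\star$, attained by a rank-one matrix, which is the tightness claim of (ii).

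For (iii), I first note that $\lambda_\star>0$: by the singular-value structure of $W$ we have $\sigma_1=1$ and $V_1=p$, so on $p^\perp=\mathrm{span}\{V_2,\dots,V_{\mathcal{K}}\}$ the form $v^{\mathsf T}Av$ for a unit $v=\sum_{i\ge2}\alpha_iV_i$ equals $\sum_{i\ge2}\alpha_i^2\sigma_i^2$, whose maximum is $\lambda_\star=\sigma_{\mathcal{K}}^2\ge\sigma_1^2=1$. Assume now $\lambda_\star$ is a simple top eigenvalue of $A$ restricted to $p^\perp$, so the unit vectors of $p^\perp$ achieving it are exactly $\pm v_\star$. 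Let $X^\star$ be any SDP optimizer with $\tau\ge\tau_{\mathrm{th}}$; equality must then hold throughout the chain $\langle A,X^\star\rangle=\sum_k\mu_k w_k^{\mathsf T}Aw_k\le\lambda_\star\,\mathrm{tr}(X^\star)\le\lambda_\star$. Since $\lambda_\star>0$, the outer equality forces $\mathrm{tr}(X^\star)=\sum_k\mu_k=1$, and the inner one forces $w_k^{\mathsf T}Aw_k=\lambda_\star$ for every $k$ with $\mu_k>0$; by simplicity each such $w_k=\pm v_\star$, that is $w_kw_k^{\mathsf T}=v_\star v_\star^{\mathsf T}$, so $X^\star=(\sum_k\mu_k)v_\star v_\star^{\mathsf T}=v_\star v_\star^{\mathsf T}$. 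The same argument applied to any optimal $L^\star$ of \eqref{eq:l0_prob_thm} in the saturated regime $N\ge N_{\mathrm{th}}$ — where by (i) it satisfies $\|L^\star\|_2=1$, $L^\star\perp p$, and $L^{\star\mathsf T}AL^\star=\lambda_\star$ — yields $L^\star=\pm v_\star$.

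The whole proof is essentially bookkeeping once the key observation is in hand: the constraint $Xp=0$ confines the range of $X$ to $p^\perp$, so $\lambda_\star$ is a per-eigenvector ceiling for $\langle A,X\rangle$. The one spot that needs genuine care is the equality analysis in (iii), specifically using $\lambda_\star\ge 1>0$ to conclude $\mathrm{tr}(X^\star)=1$ (without strict positivity one could not exclude optimizers of smaller trace, and uniqueness would fail); beyond that, the remaining work is the elementary norm identity $\|vv^{\mathsf T}\|_{1,\mathrm{entry}}=\|v\|_1^2$ and routine feasibility checks.
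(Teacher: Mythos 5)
Your proof is correct and follows essentially the same route as the paper: feasibility containment for the upper bound in (i), lower bound via the explicit candidates $v_\star$ and $v_\star v_\star^{\mathsf T}$, eigendecomposition with the range-in-$p^\perp$ observation for the SDP upper bound, and the equality chain for uniqueness. One small point where your write-up is actually tighter than the paper's: to force $\mathrm{tr}(X^\star)=1$ in part (iii), the paper appeals to a ``scaling up preserves feasibility'' argument, which as stated does not obviously respect the entrywise $\ell_1$ budget $\|X\|_{1,\mathrm{entry}}\le\tau$; your route instead reads $\mathrm{tr}(X^\star)=1$ directly off the equality $\lambda_\star=\lambda_\star\,\mathrm{tr}(X^\star)$, and you correctly supply the missing ingredient $\lambda_\star\ge\sigma_1^2=1>0$ (via $V_1=\sqrt{P_X}$, $\sigma_1=1$ from the singular-value structure of $W$), which makes the division legitimate. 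This closes a minor gap in the paper's justification without changing the overall argument.
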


\begin{proof}
\textbf{(i)} Since $\|v_\star\|_0=N_{\mathrm{th}}\le N$, the vector $v_\star$ is feasible for \eqref{eq:l0_prob_thm}
whenever $N\ge N_{\mathrm{th}}$. Hence $U_{\mathrm{OPT}}(N)\ge v_\star^{\mathsf T}A v_\star=\lambda_\star$.
On the other hand, the feasible set of \eqref{eq:l0_prob_thm} is contained in
$\{v:\|v\|_2=1,\ v\perp p\}$, so for any feasible $L$ we have $L^{\mathsf T}AL\le \lambda_\star$ by definition of
$\lambda_\star$. Therefore $U_{\mathrm{OPT}}(N)\le \lambda_\star$. Combining both inequalities yields
$U_{\mathrm{OPT}}(N)=\lambda_\star$, and $L^\star=\pm v_\star$ is optimal.

\textbf{(ii)} Let $X$ be feasible in \eqref{eq:sdp_prob_thm}. Since $X\succeq 0$, write its eigen-decomposition
$X=\sum_{i}\mu_i q_i q_i^{\mathsf T}$ where $\mu_i\ge 0$, $\{q_i\}$ are orthonormal, and $\sum_i \mu_i=\mathrm{tr}(X)\le 1$.
The constraint $Xp=0$ implies that the range of $X$ is contained in $p^\perp$, hence any eigenvector with $\mu_i>0$
satisfies $q_i\perp p$. Therefore,
\begin{align}
\langle A,X\rangle
&= \sum_i \mu_i\, q_i^{\mathsf T}A q_i
\nonumber\\&\le \sum_i \mu_i\,\lambda_\star
\nonumber\\&\le \lambda_\star,
\end{align}
so $U_{\mathrm{SDP}}(\tau)\le \lambda_\star$ for all $\tau$.
Now assume $\tau\ge \tau_{\mathrm{th}}$. The rank-one matrix $X^\star=v_\star v_\star^{\mathsf T}$ satisfies
$X^\star\succeq 0$, $\mathrm{tr}(X^\star)=\|v_\star\|_2^2=1$, and $X^\star p=v_\star(v_\star^{\mathsf T}p)=0$.
Moreover,
\begin{align}
\|X^\star\|_{1,\mathrm{entry}}
&=\sum_{i,j}|(v_\star)_i(v_\star)_j|
\nonumber\\&=\Big(\sum_i |(v_\star)_i|\Big)^2
=
\|v_\star\|_1^2
\nonumber\\&=\tau_{\mathrm{th}}
\nonumber\\&\le \tau,
\end{align}
so it is feasible, and it achieves $\langle A,X^\star\rangle=v_\star^{\mathsf T}A v_\star=\lambda_\star$.
Hence $U_{\mathrm{SDP}}(\tau)=\lambda_\star$ and the SDP is tight (rank-one).

\textbf{(iii)} Assume $\lambda_\star$ is simple on $p^\perp$ (so the maximizer in $p^\perp$ is unique up to sign).
Let $\tau\ge \tau_{\mathrm{th}}$ and let $\widetilde X$ be any SDP optimizer. By (ii),
$\langle A,\widetilde X\rangle=\lambda_\star$ and necessarily $\mathrm{tr}(\widetilde X)=1$ (otherwise scaling up
increases the objective while preserving feasibility).
Write $\widetilde X=\sum_i \mu_i q_i q_i^{\mathsf T}$ as above with $\mu_i\ge 0$ and $\sum_i\mu_i=1$, and $q_i\perp p$
whenever $\mu_i>0$. Then
\begin{align}
\lambda_\star = \langle A,\widetilde X\rangle
&= \sum_i \mu_i\, q_i^{\mathsf T}A q_i
\nonumber\\&\le \sum_i \mu_i\,\lambda_\star
\nonumber\\&= \lambda_\star.
\end{align}
Therefore all inequalities must be equalities, which forces $q_i^{\mathsf T}A q_i=\lambda_\star$ for every $i$ with
$\mu_i>0$. By simplicity of the top eigenvalue on $p^\perp$, this implies $q_i=\pm v_\star$ for all $i$ with $\mu_i>0$.
Hence $\widetilde X=v_\star v_\star^{\mathsf T}$, proving uniqueness of the SDP optimizer.
Uniqueness (up to sign) of the sparse optimizer in \eqref{eq:l0_prob_thm} follows similarly: any maximizer must achieve
$\lambda_\star$ on $p^\perp$, hence must equal $\pm v_\star$.
\end{proof}

\begin{remark}\normalfont
Theorem~\ref{thm:full_tightness} formalizes a deterministic \emph{saturation} phenomenon for the $\ell_0$-constrained
Rayleigh-quotient: once the sparsity budget satisfies $N\ge N_{\mathrm{th}}$, the sparsity constraint becomes inactive
and the optimum attains the unconstrained value $\lambda_\star$ on $p^\perp$.
Similarly, for $\tau\ge \tau_{\mathrm{th}}$ the semidefinite relaxation is no longer conservative and admits the same
rank-one optimizer $X^\star=v_\star v_\star^{\mathsf T}$, explaining the regime in which the relaxed design matches the
optimal sparse value curve observed in simulations.
\end{remark}

\section{Numerical Experiments}\label{sec:experiments}
\begin{figure}[t]
\centering
\begin{subfigure}[t]{0.48\linewidth}
    \centering
    \includegraphics[width=\linewidth]{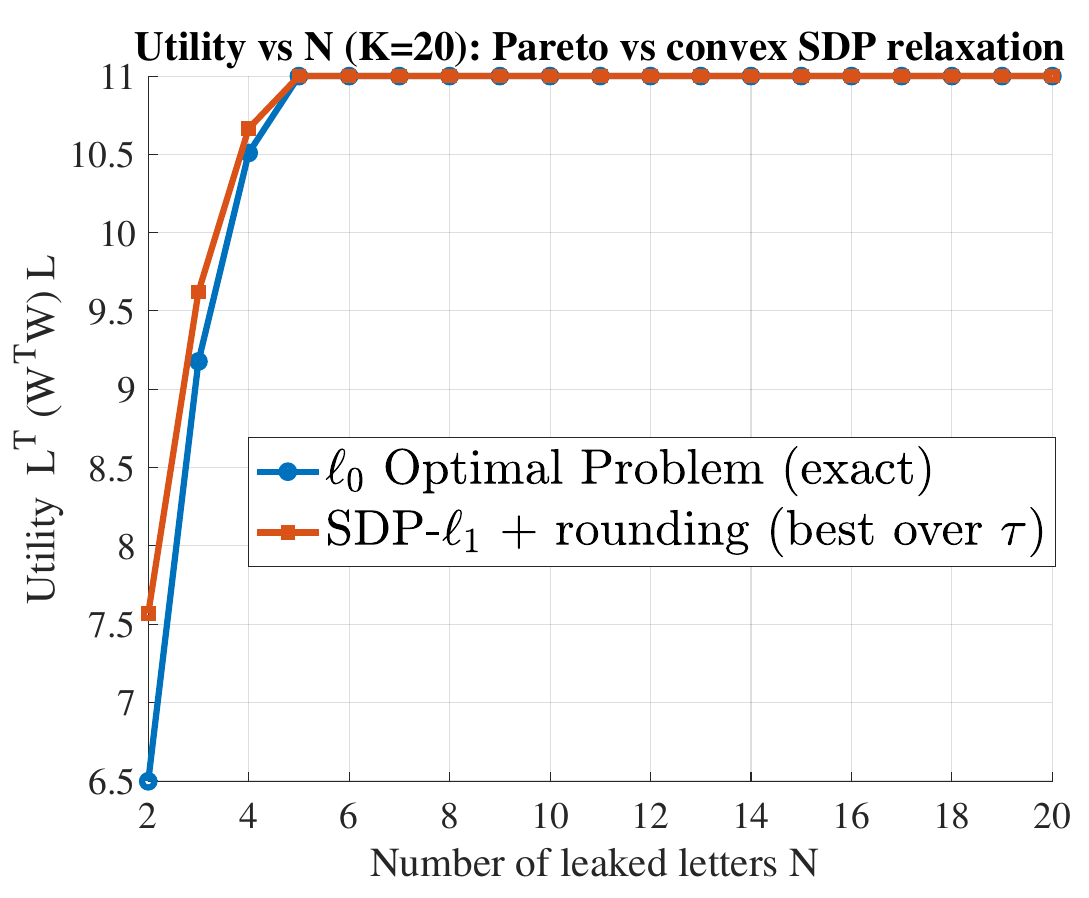}
    \caption{Utility versus leakage budget $N$.
    Exact Pareto optimum $U_{\mathrm{OPT}}(N)$ and rounded SDP envelope
    $\widehat{U}_{\mathrm{SDP}}(N)=\max_{\tau}\widehat{U}_{\mathrm{SDP}}(N,\tau)$.}
    \label{fig:utilityN_short}
\end{subfigure}
\hfill
\begin{subfigure}[t]{0.48\linewidth}
    \centering
    \includegraphics[width=\linewidth]{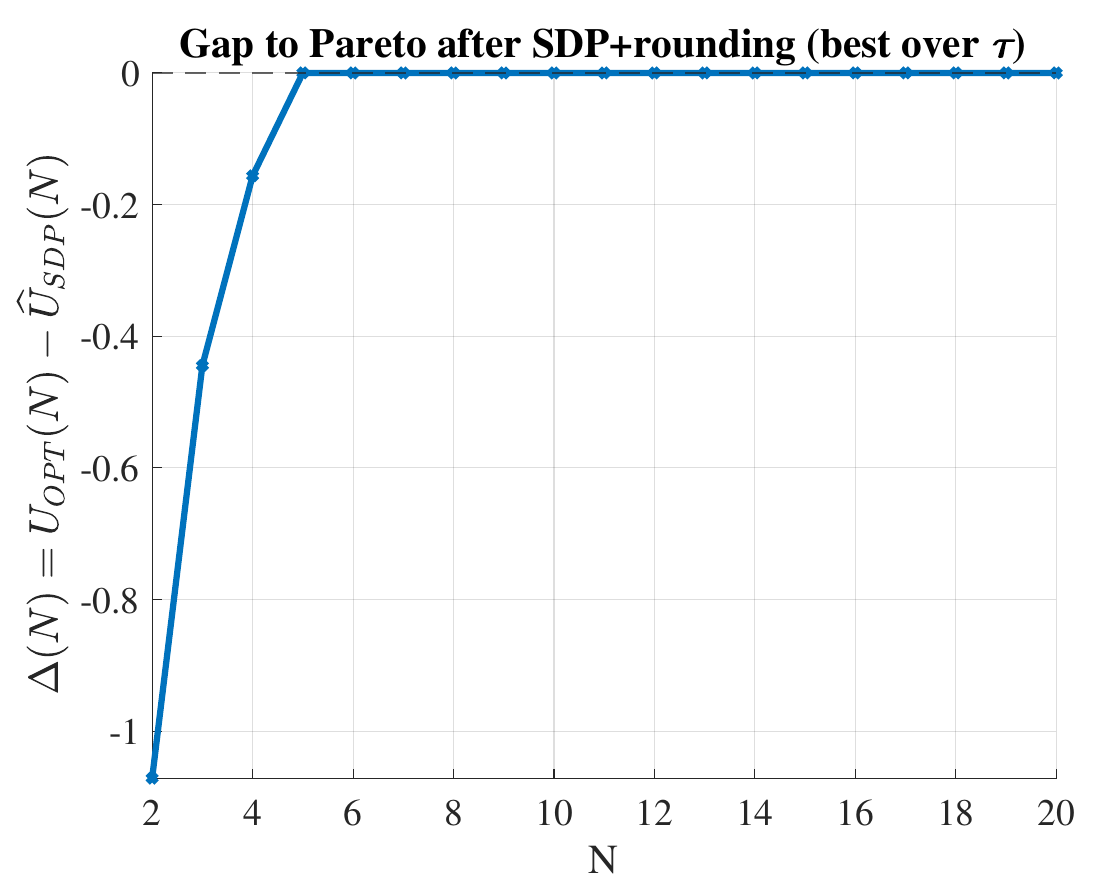}
    \caption{Pareto gap $\Delta(N)=U_{\mathrm{OPT}}(N)-\widehat{U}_{\mathrm{SDP}}(N)$.
    The gap collapses to numerical zero beyond $N_{\mathrm{th}}^{(\mathrm{emp})}$.}
    \label{fig:gapN_short}
\end{subfigure}
\caption{Tightness of the SDP relaxation as a function of the leakage budget $N$.
Beyond a sparsity threshold $N_{\mathrm{th}}^{(\mathrm{emp})}$, the convex SDP relaxation
(with simple rounding) matches the exact sparse Pareto optimum.}
\label{fig:tightness_summary}
\end{figure}
We evaluate the proposed SDP relaxation and rounding procedure for the sparse leakage-direction design of
Proposition~\ref{prop2}. For a fixed instance $(W,P_X)$, the utility of a leakage direction
$L\in\mathbb{R}^{\mathcal{K}}$ is
\begin{equation}\label{eq:expU_clean_nosubsec}
U(L)\triangleq\tfrac{1}{2}\epsilon^2\|WL\|_2^2=\tfrac{1}{2}\epsilon^2\,L^TAL,\qquad A=W^TW\succeq 0,
\end{equation}
under the constraints $\|L\|_2\le1$, $L\perp \sqrt{P_X}$, and $\|L\|_0\le N$, where $N$ is the leakage budget (number of
leaked letters). For moderate $\mathcal{K}$ we compute the exact optimal benchmark
\[
U_{\mathrm{OPT}}(N)\triangleq\max_{\|L\|_2\le 1,\ L\perp \sqrt{P_X},\ \|L\|_0\le N}\ \tfrac{1}{2}\epsilon^2\,L^TAL,
\]
by enumerating all supports of size $N$ and solving the resulting constrained Rayleigh-quotient problem on each
support.

We compare this ground truth to the convex lifted relaxation in the variable $X\approx LL^T$ shown by ${U}_{\rm SDP}(\tau)$ in \eqref{eq:sdp_relax_final}.
We sweep $\tau$
over a grid; for each $\tau$ we extract the top eigenvector of the SDP optimizer $X^\star(\tau)$, project it onto
$\sqrt{P_X}^{\perp}$, and enforce sparsity by hard-thresholding to the $N$ largest-magnitude entries (followed by
re-projection and normalization). This produces a rounded direction $\widehat{L}(N,\tau)$ and the rounded utility
$\widehat{U}_{\mathrm{SDP}}(N,\tau)=\tfrac{1}{2}\epsilon^2\,\widehat{L}(N,\tau)^TA\widehat{L}(N,\tau)$.
We report the best-over-$\tau$ envelope
\[
\widehat{U}_{\mathrm{SDP}}(N)\triangleq \max_{\tau}\widehat{U}_{\mathrm{SDP}}(\tau),
\]
and the Pareto gap $\Delta(N)\triangleq U_{\mathrm{OPT}}(N)-\widehat{U}_{\mathrm{SDP}}(N)$.

Figure ~\ref{fig:utilityN_short} plots the exact Pareto curve $U_{\mathrm{P}}(N)$ and the rounded SDP envelope
$\widehat{U}_{\mathrm{SDP}}(N)$ versus the leakage budget $N$. For small $N$, the rounded SDP is below the exact Pareto
value; beyond an empirical threshold $N_{\mathrm{th}}^{(\mathrm{emp})}$ the curves coincide to numerical precision,
indicating that the relaxation becomes effectively tight in this regime. Fig.~\ref{fig:gapN_short} shows the Pareto
gap $\Delta(N)$, which collapses to (numerical) zero beyond $N_{\mathrm{th}}^{(\mathrm{emp})}$.

\section{Conclusion}

We studied privacy mechanism design under a \emph{sparse point-wise leakage} criterion that enforces \emph{per-output}
control of disclosure by limiting (i) the number of sensitive realizations that any released symbol can influence and
(ii) the corresponding worst-case leakage toward them. In the high-privacy regime (small $\epsilon$), an
information-geometric local expansion yields a quadratic approximation of the privacy--utility trade-off. When
$P_{X|Y}$ is invertible, the resulting approximate design reduces to an $\ell_0$-constrained Rayleigh-quotient
maximization over the subspace $\sqrt{P_X}^{\perp}$; moreover, it is without loss of optimality to restrict to a binary
released variable $U$ with uniform mass. The induced sparse Rayleigh-quotient is NP-hard in general (via its connection
to sparse PCA), which motivates our semidefinite relaxation and rounding procedure. Finally, we characterized a
deterministic sparsity threshold beyond which the sparse optimum saturates at the unconstrained spectral value and the
SDP relaxation becomes exact (rank-one), explaining the sharp transition observed in the numerical results.
\clearpage   
\bibliographystyle{IEEEtran}
\bibliography{IEEEabrv,IZS}

\end{document}